\def\BibTeX{{\rm B\kern-.05em{\sc i\kern-.025em b}\kern-.08em
    T\kern-.1667em\lower.7ex\hbox{E}\kern-.125emX}}
\newcommand{\stirlingii}{\genfrac{\{}{\}}{0pt}{}}
\newcommand{\stirlingi}{\genfrac{[}{]}{0pt}{}}
\def\BibTeX{{\rm B\kern-.05em{\sc i\kern-.025em b}\kern-.08em
    T\kern-.1667em\lower.7ex\hbox{E}\kern-.125emX}}
\newtheorem{theorem}{Theorem}
\newtheorem{lemma}{Lemma}
\newtheorem{example}{Example}
\newtheorem{assumption}{Assumption}
\newtheorem{remark}{Remark}
\newtheorem{definition}{Definition}
\begin{document}
\title{Prescribed-Time Control with Linear Decay for Nonlinear Systems}
\author{Amir Shakouri, \IEEEmembership{Member, IEEE}, and Nima Assadian, \IEEEmembership{Member, IEEE}
\thanks{The authors are with the Department of Aerospace Engineering, Sharif University of Technology, Tehran, Iran (e-mail: \href{mailto:a_shakouri@outlook.com}{a$\_$shakouri@outlook.com}; \href{mailto:assadian@sharif.edu}{assadian@sharif.edu}).}}

\maketitle

\pagestyle{empty}
\thispagestyle{empty}

\begin{abstract}
In this letter, a new notion of stability is introduced, which is called \textit{triangular stability}. A system is called triangularly stable if the norm of its state vector is bounded by a decreasing linear function of time such that its intersection point with the time axis can be arbitrarily commanded by the user. Triangular stability implies prescribed-time stability, which means that the nonlinear system is converged to zero equilibrium at an arbitrary finite time. A prescribed-time controller with guaranteed triangular stability is developed for normal form nonlinear systems with uncertain input gain, which is able to reject the disturbances and unmodeled dynamics. Numerical simulations are carried out to visualize the results for second and fourth-order systems. 
\end{abstract}
\begin{IEEEkeywords}
Nonlinear output feedback; Uncertain systems; Prescribed-time control; Triangular stability
\end{IEEEkeywords}

\section{Introduction}
\label{sec:I}
\IEEEPARstart{P}{rescribed-time} control methods are time-varying techniques for stabilizing nonlinear systems in an exact arbitrary finite time. Comparing to finite-time \cite{bhat2000finite} and fixed-time control \cite{polyakov2011nonlinear} methods, the prescribed-time control has a short history in the literature. In finite-time control, the system converges within a finite time that is unknown or depends on the initial conditions of the system. Fixed-time control provides a solution by which an upper bound, independent of initial conditions, is obtainable for the time of convergence. On the other hand, a prescribed-time controller (PTC) forces the system to automatically  converge at the intended time. The disturbance rejection ability of the PTCs, besides their smooth chattering-free behavior, makes this class of controllers much useful for many engineering applications. Safety and precision of uncertain systems, especially when they are subject to time-varying environments or cooperating with other dynamic systems, can be much improved if the user can directly command the stabilization time to the system.

The use of time-varying approaches to stabilize a nonlinear system in a prescribed finite time initially proposed by Song et al. \cite{song2017time}. In the prescribed-time control methods, a mapping from the infinite time scale onto an arbitrary finite time scale is the key idea to achieve a time-varying controller. The behavior of time-varying methods under non-vanishing uncertainties is studied by Wang et al. \cite{wang2019general}. A similar approach called the generalized time transformation method is proposed by Tran et al. \cite{tran2020finite} and studied for multiple systems by Arabi et al. \cite{arabi2019robustness}. Krishnamurthy et al. \cite{krishnamurthy2020dynamic} proposed a prescribed-time controller for systems with matched uncertainties, and the prescribed-time control of systems with uncertain input gains has been studied in \cite{krishnamurthy2020prescribed}. Autonomous methods for predefined-time control are analyzed in \cite{jimenez2020lyapunov} and a combination of autonomous and time-varying methods are investigated in \cite{gomez2020design}. In these works, the disturbances need to be globally bounded by a known constant. Thus, their applications are limited, and they cannot be used for unknown systems or known systems subject to state-dependent unmodeled dynamics. Moreover, autonomous predefined-time controllers may suffer from chattering under disturbances near equilibrium.

In this letter, a new notion of nonasymptotic stability is defined for nonlinear systems, called \textit{triangular stability}, which implies that the system solution is bounded by a decreasing linear function of time. We have proposed a PTC scheme for disturbed normal form systems with uncertain input gain, by which the closed-loop response is globally triangularly stable (or attractive), a strict stability condition that is not possible by the current state-of-the-art PTCs \cite{song2017time,wang2019general}.

\section{Preliminaries}

This section is devoted to introducing the basic notations, defining the nonasymptotic notions of stability, and formulating the Stirling numbers and matrices.

\subsection{Notations}

Let $\mathbb{R}^{m\times n}$ denote the space of $m \times n$ real matrices and $\mathbb{R}^n$ denotes the space of $n$-dimensional real vectors. The $n$-dimensional identity matrices is denoted by $\mathbb{I}_n$. The $i$th entry of vector $r\in \mathbb{R}^n$ is referred to by $r_i$ and the $ij$th entry of matrix $R$ is shown by $R(i,j)$. For matrix $R\in\mathbb{R}^{n\times n}$ we denote by $R^{-1}$ its inverse (if it exists). An inverse function is denoted by $f^{-1}(\cdot)$ for function $f(\cdot)$ (if the inverse exists). The symbol $\|\cdot\|$ denotes the $2$-norm for vectors and matrices. The Hadamard and Kronecker products are denoted by $\circ$ and $\otimes$, respectively. The maximum and minimum eigenvalues of a matrix $R$ is referred to by $\lambda_{\mathrm{max}}(R)$ and $\lambda_{\mathrm{min}}(R)$, respectively. The $i$th derivative of a function $f$ with respect to its argument is shown by $f^{(i)}$. The uniform distribution of a random variable between $a$ and $b$ is denoted by $\mathcal{U}(a,b)$.

\subsection{Stability Notions}

For a general $n$-dimensional nonlinear system as
\begin{equation}
\label{eq:1}
\dot{x}=a(x,t):\hspace{4mm}a(0,t)=0,\forall t\geq0
\end{equation}
three conventional notions of stability, known as \textit{Lyapunov stability}, \textit{global asymptotic stability}, and \textit{global exponential stability}, are frequently used in the analysis of control systems. In view of nonasymptotic techniques, in addition to the Lyapunov stability, the concepts of \textit{global finite-time stability} and \textit{global fixed-time stability} are defined as follows:
\begin{definition}[\cite{polyakov2011nonlinear}]
\label{def:FTS}
Let $x_0$ denote the state vector of \eqref{eq:1} at $t=0$. Then, the zero equilibrium of system \eqref{eq:1} is called
\begin{enumerate}
\item \textit{globally finite-time stable}, if it is globally asymptotically stable and there exists a settling time function $t_{FTC}(x_0):\mathbb{R}^n\rightarrow(0,\infty)$ such that for all $t\in[t_{FTC}(x_0),\infty)$ we have $x(t)=0$. 
\item \textit{globally fixed-time stable}, if it is globally finite-time stable and there exists $\bar{t}_{FTC}>0$ such that $t_{FTC}(x_0)<\bar{t}_{FTC}$ for all $x_0$, i.e., it is finite-time stable and an upper bound is known for the convergence time. 
\end{enumerate}
\end{definition}

In addition to the above definitions regarding system \eqref{eq:1}, for a system as follows:
\begin{equation}
\label{eq:1p}
\dot{x}=a(x,t,\tau):\hspace{4mm}a(0,t,\tau)=0,\forall t\geq0,\forall\tau>0
\end{equation}
where $\tau\in(0,\infty)$ is a user-defined parameter, the \textit{global prescribed-time stability} can be defined as:

\begin{definition}
\label{def:PTS}
Let $x_0$ denote the state vector of \eqref{eq:1p} at $t=0$. Then, the zero equilibrium of system \eqref{eq:1p} is called \textit{globally prescribed-time stable}, if it is globally finite-time stable and for every $\tau>0$ we have $t_{FTC}(x_0)<\tau$, i.e., it is finite-time stable and the convergence time can be arbitrarily specified. 
\end{definition}

In this letter, we define a new notion of stability that is inspired by the exponential stability for infinite-time systems:

\begin{definition}
\label{def:TS}
Let $x_0$ denote the state vector of \eqref{eq:1p} at $t=0$. Then, the zero equilibrium of system \eqref{eq:1p} is called \textit{globally triagularly stable}, if there exists $\sigma>1$ such that for every $x_0\in\mathbb{R}^n$ and $\tau>0$ we have:
\begin{equation}
\label{eq:2}
\|x(t)\|\leq\sigma\|x_0\|\Lambda(t/\tau),\hspace{4mm}\forall t\in[0,\infty),
\end{equation}
where $\Lambda(\cdot):\mathbb{R}\rightarrow[0,1]$ is the triangular function defined as (see Fig. \ref{fig:1})
\begin{equation}
\label{eq:3}
\Lambda(t/\tau)\coloneqq\max\{1-t/\tau,0\}=\left\{\begin{array}{lcl}
1-t/\tau & \mathrm{if} & t<\tau \\
0     & \mathrm{if} & t\geq\tau
\end{array}\right.
\end{equation}
\end{definition}

\begin{figure}[!h]
\centering\includegraphics[width=0.7\linewidth]{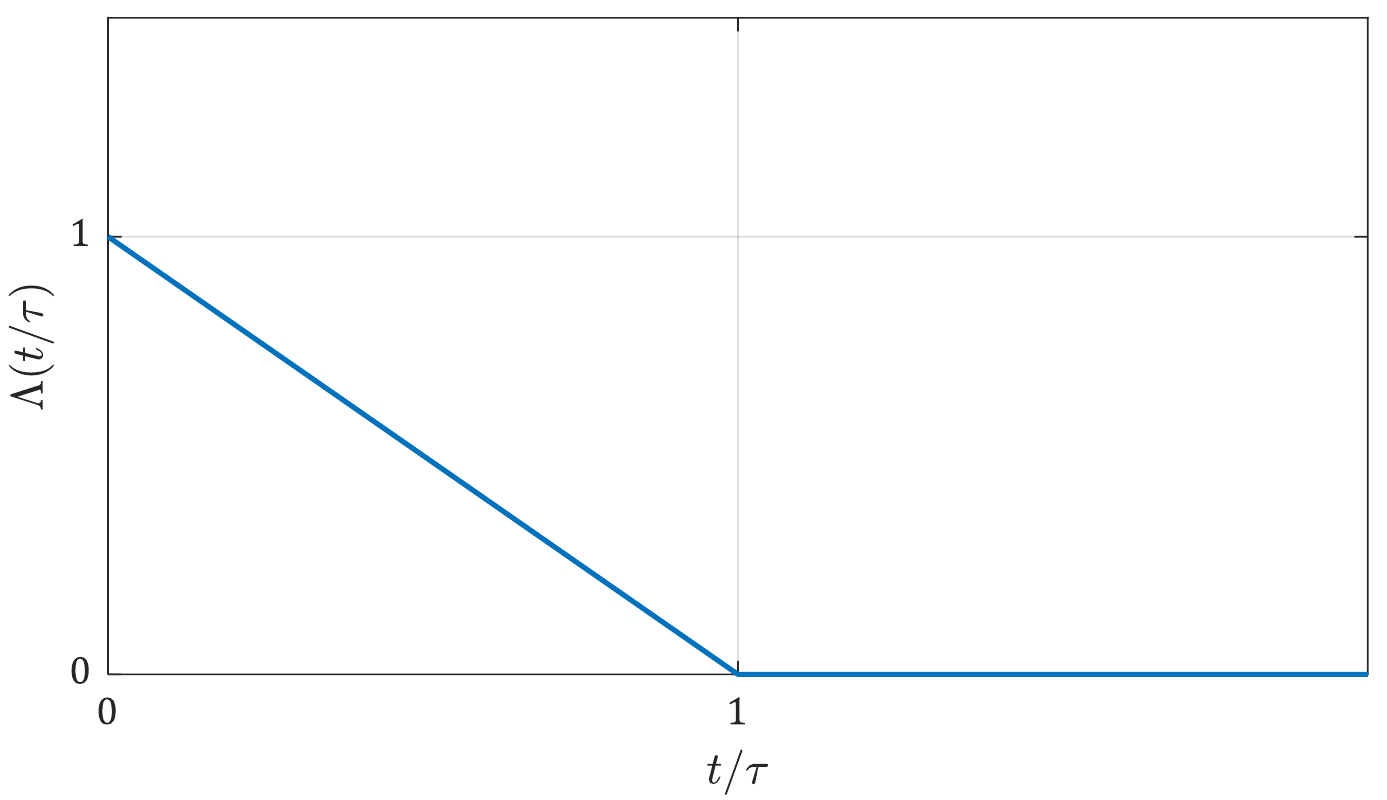}
\caption{Triangular function.}
\label{fig:1}
\end{figure}

Moreover, we define the triangular attractivity as follows with a less conservative condition:

\begin{definition}
\label{def:TA}
Let $x_0$ denote the state vector of \eqref{eq:1p} at $t=0$. Then, the zero equilibrium of system \eqref{eq:1p} is called \textit{globally triagularly attractive}, if there exists $\varsigma>0$ and $t_0<\tau$ such that for every $x_0\in\mathbb{R}^n$ and $\tau>0$ we have:
\begin{equation}
\label{eq:2}
\|x(t)\|\leq\varsigma\Lambda(t/\tau),\hspace{4mm}\forall t\in[t_0,\tau),
\end{equation}
\end{definition}

A triangularly stable nonlinear system is bounded by a linear function with a negative slope such that its intersection point with the time axis, $t=\tau$, is specifiable. It can be seen that triangular stability implies prescribed-time stability, prescribed-time stability implies fixed-time stability, and fixed-time stability implies finite-time stability, but the vice versa does not necessarily hold (see Fig. \ref{fig:ven}).

\begin{remark}
\label{rem:1}
Note that the difference between the fixed-time and prescribed-time notions of stability is the fact that in fixed-time stable systems, an upper bound $\bar{t}_{FTC}$ exists. This $\bar{t}_{FTC}$ cannot be simply commanded to the system since it can be a function of the system model and perturbations. Determining fixed-time control parameters by which $\bar{t}_{FTC}$ be lower than the desired value, needs considerable computational burden and restricting assumptions, if not impossible. However, in the prescribed-time method, the user specifies the convergence time $\tau$ as an input parameter to the control system, without concerning about the system dynamics. 
\end{remark}

\begin{figure}[!h]
\centering\includegraphics[width=1\linewidth]{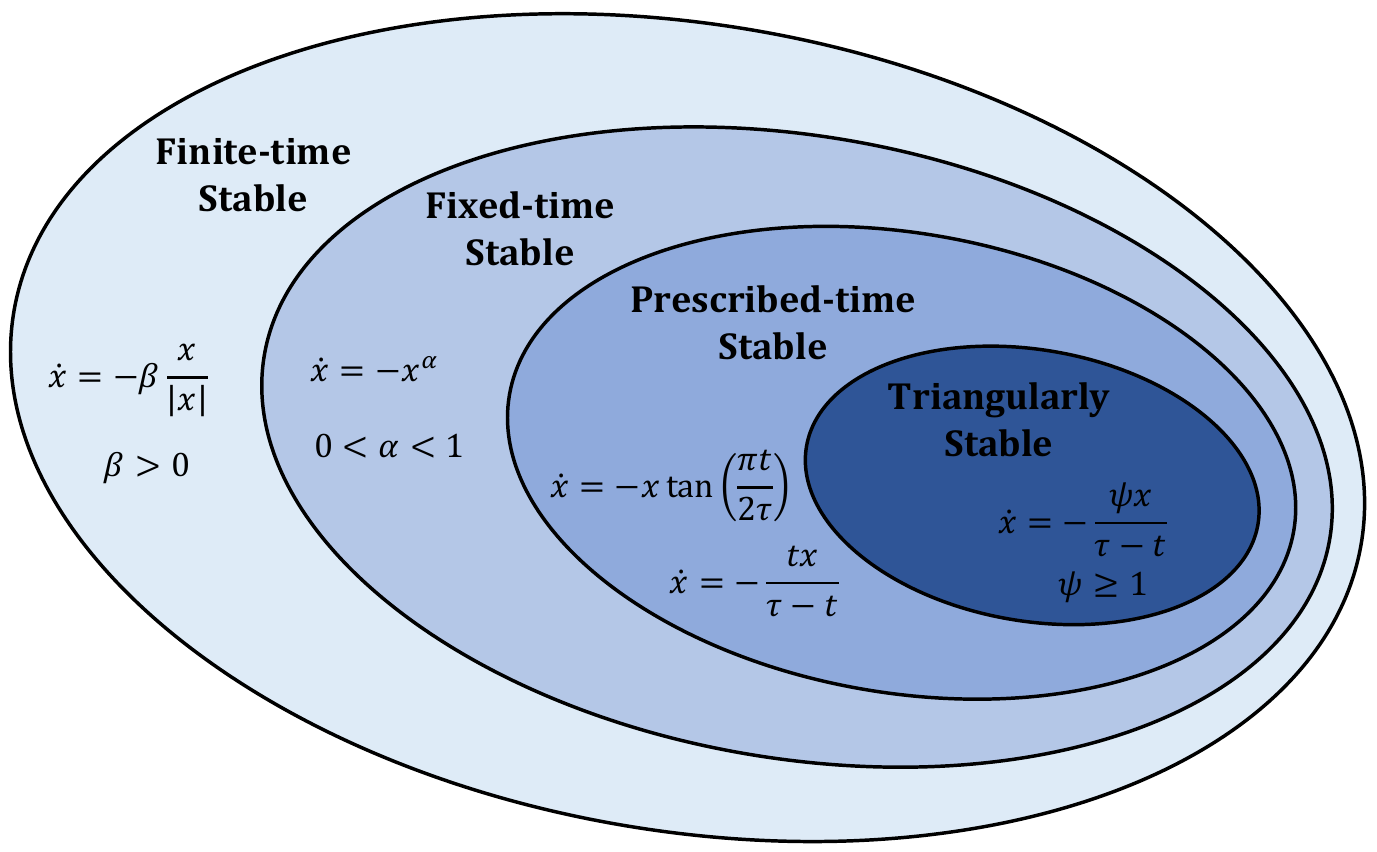}
\caption{Venn diagram representation of the nonasymptotic notions of stability with some examples.}
\label{fig:ven}
\end{figure}

\subsection{Stirling Numbers} 

Stirling numbers of the first kind are shown by $\stirlingi{n}{k}$ and they can be obtained from the following recursive formula:
\begin{equation}
\label{eq:A:2}
\stirlingi{n+1}{k}=n\stirlingi{n}{k}+\stirlingi{n}{k-1}
\end{equation}
for $k>0$ knowing that $\stirlingi{0}{0}=1$ and $\stirlingi{n}{0}=\stirlingi{0}{n}=1$. For Stirling numbers of the first kind one can verify that $\sum_{k=0}^n\stirlingi{n}{k}=n!$.

Stirling numbers of the second kind are associated with two indices $n$ and $k$ which are shown by $\stirlingii{n}{k}$ and can be obtained by the following explicit formula:
\begin{equation}
\label{eq:A:1}
\stirlingii{n}{k}=\frac{1}{k!}\sum_{i=1}^k(-1)^i\binom{k}{i}(k-i)^n
\end{equation}
The $n$th Bell number is $B_n=\sum_{k=0}^n\stirlingii{n}{k}$. We define the first and second kinds of Stirling matrices, denoted by $s_n\in\mathbb{R}^{n\times n}$ and $S_n\in\mathbb{R}^{n\times n}$, respectively, as follows \cite{comtet2012advanced}:
\begin{equation}
\label{eq:A:4}
s_n(i,j)=\left\{\begin{array}{lcl}
\stirlingi{i-1}{j-1} & \mathrm{if} & i\geq j \\
0 & \mathrm{if} &  \mathrm{otherwise}
\end{array}\right.
\end{equation}
\begin{equation}
\label{eq:A:3}
S_n(i,j)=\left\{\begin{array}{lcl}
\stirlingii{i-1}{j-1} & \mathrm{if} &  i\geq j \\
0 & \mathrm{if} &  \mathrm{otherwise}
\end{array}\right.
\end{equation}

\section{Main Results}
\label{sec:main}

We consider systems in the Byrnes-Isidori normal form \cite{byrnes1985global} with uncertain input gain as follows:
\begin{equation}
\label{eq:4}
\left\{\begin{array}{lcl}
\dot{x}_i & = & x_{i+1}; \hspace{2mm} i=1:n-1  \\
\dot{x}_n & = & f(x,u,t)+\gamma g(t)u
\end{array}\right.
\end{equation}
where $x=[x_1,\cdots,x_n]^T\in\mathbb{R}^n$ is the state vector, $u\in\mathbb{R}$ is the control input, $f(\cdot,\cdot,\cdot):\mathbb{R}^n\times\mathbb{R}\times[0,\infty)\rightarrow\mathbb{R}$ is the disturbance of the system that is generally considered unknown in this study, $g(\cdot):[0,\infty)\rightarrow\mathbb{R}$ is a nonzero known input gain, and $\gamma$ is an unknown constant. 

\begin{assumption}
\label{ass:1}
For system \eqref{eq:4}, there exist $\phi,\phi_0\geq0$ and $\gamma_{\mathrm{min}}>0$ such that $|f(x,u,t)|\leq\phi\|x\|+\phi_0$ and $\gamma_{\mathrm{min}}\leq\gamma$.
\end{assumption}

The following theorem states the main result of this letter, which is proved in Section \ref{sec:proof}.

\begin{theorem}
\label{th:1}
Suppose Assumption \ref{ass:1} is satisfied for some $\phi$, $\phi_0$, and $\gamma_{\mathrm{min}}$. Let function $\pi(\cdot,\cdot,\cdot):\mathbb{R}^n\times[0,\infty)\times(0,\infty)\rightarrow\mathbb{R}$ be defined as
\begin{equation}
\label{eq:5}
\begin{split}
\pi(x,t,\tau)&=\sum_{j=1}^n\sum_{i=1}^{j}\stirlingii{j-1}{i-1}\frac{c_j}{\alpha^{n-j+1}}\frac{(-1)^{j-i}}{(\tau-t)^{n-i+1}}x_i\\
&-\sum_{j=2}^{n}\stirlingii{n}{j-1}\frac{(-1)^{n-j+1}}{(\tau-t)^{n-j+1}}x_j
\end{split}
\end{equation}
with parameters $c_j$, $j=1,\cdots,n$ selected such that the following matrix is Hurwitz:
\begin{equation}
\label{eq:6}
E=\left[\begin{array}{ccccc}
0 & 1 & 0 & \cdots & 0  \\
0 & 0 & 1 & \cdots & 0  \\
\vdots & \vdots & \vdots & \ddots & \vdots  \\
0 & 0 & 0 & \cdots & 1  \\
c_1 & c_2 & c_3 & \cdots & c_n 
\end{array}\right]\in\mathbb{R}^n
\end{equation}
and let $P\in\mathbb{R}^{n\times n}$ be the solution of the Lyapunov equation $E^TP+PE+2\mathbb{I}_n=0$. Then, the closed-loop solution of system \eqref{eq:4} under a PTC $u=\pi(x,t,\tau)/(\gamma_{\mathrm{min}}g(t))$ is:
\begin{enumerate}
\item globally triangularly attractive, if $\alpha$ is selected such that:
\begin{equation}
\label{eq:a1}
\begin{split}
\alpha<\min\left\{\frac{\lambda_{\mathrm{min}}(P)}{n\lambda_{\mathrm{max}}(P)\lambda_{\mathrm{min}}(P)+n!\lambda_{\mathrm{max}}^2(P)},\frac{1}{\tau}\right\}
\end{split}
\end{equation}
\item globally triangularly stable, if Assumption \ref{ass:1} is satisfied with $\phi_0=0$ and $\alpha$ is selected small enough such that in addition to \eqref{eq:a1}, it satisfies:
\begin{equation}
\label{eq:a2}
\alpha\leq\frac{\lambda_{\mathrm{min}}(P)}{\lambda_{\mathrm{min}}(P)+n!\lambda_{\mathrm{max}}^2(P)\left(\tau\phi+1\right)}
\end{equation}
\end{enumerate}
\end{theorem}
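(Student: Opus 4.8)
The plan is to linearise the closed loop through a joint change of state and time, carry out a quadratic Lyapunov estimate in the transformed variables, and then transport the resulting exponential bound back to physical time, where the algebraic scaling turns it into the linear (triangular) envelope. Concretely, I would introduce the blow-up time $s$ with $ds/dt=1/(\alpha(\tau-t))$, i.e.\ $s=\alpha^{-1}\ln(\tau/(\tau-t))$, mapping $[0,\tau)$ onto $[0,\infty)$, together with the lower-triangular, time-varying change of coordinates $z=M(t)x$ whose $(j,i)$ entry is $\stirlingii{j-1}{i-1}(-1)^{j-i}\big/\big(\alpha^{n-j+1}(\tau-t)^{n-i+1}\big)$ --- exactly the matrix read off from $\pi$ in \eqref{eq:5}. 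The first and most computation-heavy task is to verify, using the recurrence for $\stirlingii{\cdot}{\cdot}$, that the integrator chain \eqref{eq:4} under $u=\pi/(\gamma_{\mathrm{min}}g(t))$ becomes
\[ \frac{dz}{ds}=Ez+\alpha Gz+e_n\big[(\rho-1)\pi+f\big],\qquad \rho\coloneqq\gamma/\gamma_{\mathrm{min}}\ge1, \]
where $e_n$ is the last standard basis vector and $G$ is a fixed bounded matrix produced by differentiating the time-dependent scaling; the second sum in \eqref{eq:5} is precisely what cancels the residual drift not already of the form $\alpha Gz$, so that in the nominal case ($\rho=1$, $f=0$) the last row closes into the companion matrix $E$ of \eqref{eq:6}.

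Next I would take $V(z)=z^{T}Pz$ with $P$ solving $E^{T}P+PE+2\mathbb{I}_n=0$, so that $\lambda_{\mathrm{min}}(P)\|z\|^2\le V\le\lambda_{\mathrm{max}}(P)\|z\|^2$ and, along the transformed flow,
\[ \frac{dV}{ds}=-2\|z\|^2+2\alpha\,z^{T}PGz+2\,z^{T}Pe_n\big[(\rho-1)\pi+f\big]. \]
I would then bound the three perturbation terms separately. The kinematic term is controlled through $\|PG\|$ and yields the $n\lambda_{\mathrm{max}}(P)\lambda_{\mathrm{min}}(P)$ contribution in \eqref{eq:a1}. The disturbance is handled by $|f|\le\phi\|x\|+\phi_0$ together with $\|x\|\le\|M^{-1}(t)\|\,\|z\|$ and an estimate of $\|M^{-1}\|$ expressed through the first-kind Stirling matrix $s_n$; since $\sum_k\stirlingi{n}{k}=n!$, this is the origin of the factor $n!\lambda_{\mathrm{max}}^2(P)$ and, when $\phi_0=0$, of the additional requirement \eqref{eq:a2}.

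The delicate step, which I expect to be the main obstacle, is the uncertain-gain term $2(\rho-1)z^{T}Pe_n\pi$: because Assumption \ref{ass:1} bounds $\gamma$ only from below, $\rho$ may be arbitrarily large, so this term cannot be absorbed by taking $\alpha$ small. One must instead argue structurally that the over-actuation reinforces rather than destroys negative-definiteness --- i.e.\ that $z^{T}Pe_n\,\pi$ carries the sign that makes $(\rho-1)$ help rather than hurt --- and obtaining a clean, $\rho$-uniform statement here is the crux of the whole proof. Granting it, collecting the estimates gives $dV/ds\le-2\varepsilon\|z\|^2\le-(2\varepsilon/\lambda_{\mathrm{max}}(P))V$ with $\varepsilon>0$ precisely on the $\alpha$-ranges \eqref{eq:a1}--\eqref{eq:a2}, whence $\|z(s)\|\le\sqrt{\lambda_{\mathrm{max}}(P)/\lambda_{\mathrm{min}}(P)}\,\|z(0)\|\,e^{-(\varepsilon/\lambda_{\mathrm{max}}(P))s}$.

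Finally I would transport the estimate back to $t$. Because $e^{-\alpha s}=(\tau-t)/\tau=\Lambda(t/\tau)$ and the slowest-decaying block of the inverse is $x_n=\alpha(\tau-t)z_n$, we have $\|M^{-1}(t)\|\le C\alpha(\tau-t)$ with $C$ independent of $t$, so
\[ \|x(t)\|\le\|M^{-1}(t)\|\,\|z(t)\|\le \mathrm{const}\cdot\|x_0\|\,\Lambda(t/\tau)^{\,1+\varepsilon/(\alpha\lambda_{\mathrm{max}}(P))}\le\sigma\|x_0\|\,\Lambda(t/\tau), \]
the single factor $(\tau-t)$ supplied by $M^{-1}$ being exactly what upgrades exponential decay in $s$ to the linear envelope in $t$; this proves part (2). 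For part (1) the constant $\phi_0$ only produces ultimate boundedness $\limsup_{s}\|z(s)\|\le\bar z$ rather than decay, but multiplying by the same $(\tau-t)$ factor still forces $\|x(t)\|\le\varsigma\,\Lambda(t/\tau)$ for all $t$ beyond some $t_0<\tau$, which is triangular attractivity in the sense of Definition \ref{def:TA}.
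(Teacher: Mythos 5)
Your overall architecture is the same as the paper's: a time dilation $s=\alpha^{-1}\ln\bigl(\tau/(\tau-t)\bigr)$ (the paper's $\kappa=\mu^{-1}$ with $\mu(t)=\tau(1-e^{-\alpha t})$), a Stirling-number change of coordinates, the quadratic Lyapunov function $V=z^TPz$ with the same $P$, and transport of the exponential estimate back to physical time where one factor of $(\tau-t)$ upgrades it to the triangular envelope. The genuine gap is exactly where you flag it, and it is not a gap that can be "granted": with your decomposition $dz/ds=Ez+\alpha Gz+e_n[(\rho-1)\pi+f]$, the Lyapunov derivative contains $2(\rho-1)z^TPe_n\pi$, which is unbounded in $\rho$, and the sign property you hope for is generically false. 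In your $z$-coordinates $\pi=c^Tz+O(\alpha)\|z\|$, so the relevant quadratic form is the symmetric part of $Pe_nc^T$. Already for $n=2$, writing $P=[p_{ij}]$, the symmetric part of $Pe_2c^T$ has determinant $-\tfrac14\left(p_{12}c_2-p_{22}c_1\right)^2\leq0$, hence it is indefinite whenever $p_{12}c_2\neq p_{22}c_1$ (the generic case); on directions where it is positive, taking $\rho$ large makes $dV/ds>0$ no matter how small $\alpha$ is. So no choice of $\alpha$ satisfying \eqref{eq:a1}--\eqref{eq:a2} closes your estimate, and neither part (1) nor part (2) follows from your chain of reasoning. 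Since robustness to a gain known only from below ($\gamma\geq\gamma_{\mathrm{min}}$, i.e.\ $\rho$ arbitrarily large) is the distinctive content of Theorem \ref{th:1}, what is missing is the theorem's central claim, not a technicality.

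The paper avoids this trap by never forming the difference ``actual closed loop minus nominal-gain closed loop.'' Instead it inverse-maps the actual closed loop and solves the resulting equation for $\dot{\xi}_n$: under the inverse map, the drift-cancellation part of $\pi$ (the second sum in \eqref{eq:5}) reproduces a term that contains $\xi_{n+1}\equiv\dot{\xi}_n$ itself, so after solving for $\dot{\xi}_n$ the gain enters only through the factors $1/\rho\in(0,1]$ and $(1-1/\rho)\in[0,1)$ — see \eqref{eq:IV_16}--\eqref{eq:IV_17}, $\dot{\xi}=E\xi+(1-1/\rho)s(\xi,t)+(1/\rho)\bar{f}(\xi,t)$ with $\|s(\xi,t)\|\leq n!\alpha\|\xi\|$. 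Both perturbations then carry coefficients bounded by $1$ uniformly in $\rho\geq1$, and smallness of $\alpha$ per \eqref{eq:a1}/\eqref{eq:a2} absorbs them for every admissible $\gamma$; this $\rho$-uniformity is exactly what your version lacks. The remaining parts of your proposal are sound in spirit and in places cleaner than the paper: your handling of $\phi_0>0$ (ultimate boundedness of $z$ plus the factor $(\tau-t)$ from $\|M^{-1}(t)\|$) is a tidier route to triangular attractivity than the paper's variation-of-constants computation \eqref{eq:IV_21p2}--\eqref{eq:IV_24}, and your transport-back step matches \eqref{eq:IV_13}--\eqref{eq:IV_14}. But all of that sits downstream of the negative-definiteness estimate, so until the uncertain-gain term is handled by something like the paper's rearrangement, the proof does not go through.
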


\begin{remark}
\label{rem:3}
It is provable that a controller, written in terms of gains multiplied by state variables, cannot reach zero equilibrium within a (known or unknown) finite time unless the gains approach infinity as the time approaches the convergence moment. This fact is also seen in the proposed PTC of Theorem \ref{th:1}. However, in a prescribed-time scheme, the convergence time and the singularity moment are known. In addition, the system state can reach any nonzero error by finite values of PTC gains. Therefore, to avoid singularity problems in practice, the termination time can be set slightly before $t=\tau$, depending on the required tolerance and the processor memory.
\end{remark}

\begin{remark}
\label{rem:4}
The user is free in selecting the value of $\tau$. However, it can be seen from the proposed PTC that the value of control input at $t=0$ is $\pi(x(0),0,\tau)=\sum_{j=1}^n\sum_{i=1}^j\stirlingii{j-1}{i-1}c_j[\alpha^{j-n-1}(-1)^{j-i}/\tau^{n-i+1}]x_i(0)-\sum_{j=2}^n\stirlingii{n}{j-1}[1/(-\tau)^{n-j+1}]x_j(0)$, which means that the initial control input increases by decreasing the convergence time $\tau$.
\end{remark}

Two examples are presented in the following to visualize the behavior of some systems under the proposed PTC\footnote{MATLAB\textsuperscript{\tiny\textregistered} codes and Simulink\textsuperscript{\tiny\textregistered} models for the proposed controller can be found in \href{https://github.com/a-shakouri/prescribed-time-control}{https://github.com/a-shakouri/prescribed-time-control}}.

\begin{example}
\label{ex:2}
Consider a second-order system as \eqref{eq:4} with $n=2$ and assume $g(t)=1$. Suppose that the disturbance term is also a function of control input as $f(x,u,t)=50\cos(u)+\cos(t)x_1+\exp(\sin(x_1))x_2$ for which Assumption \ref{ass:1} holds with $\phi=e=2.71828$ and $\phi_0=50$. Also, we assume $\gamma=1.1$ but it is unknown for the controller and it is only known that $\gamma\geq\gamma_{\mathrm{min}}=1$. Since $\phi_0\neq0$, the PTC can only guarantee triangular attractivity, for which parameter $\alpha$ is selected independent from the values of $\phi$ and $\phi_0$. We select $c_1=-1$, $c_2=-2$ and $\alpha=0.0214$ to satisfy condition \eqref{eq:a1} for $\tau=10$, $15$, and $20$. Fig. \ref{fig:3} shows the system response starting from $x(0)=[10,10]^T$.
\end{example}

\begin{figure}[!h]
\centering\includegraphics[width=1\linewidth]{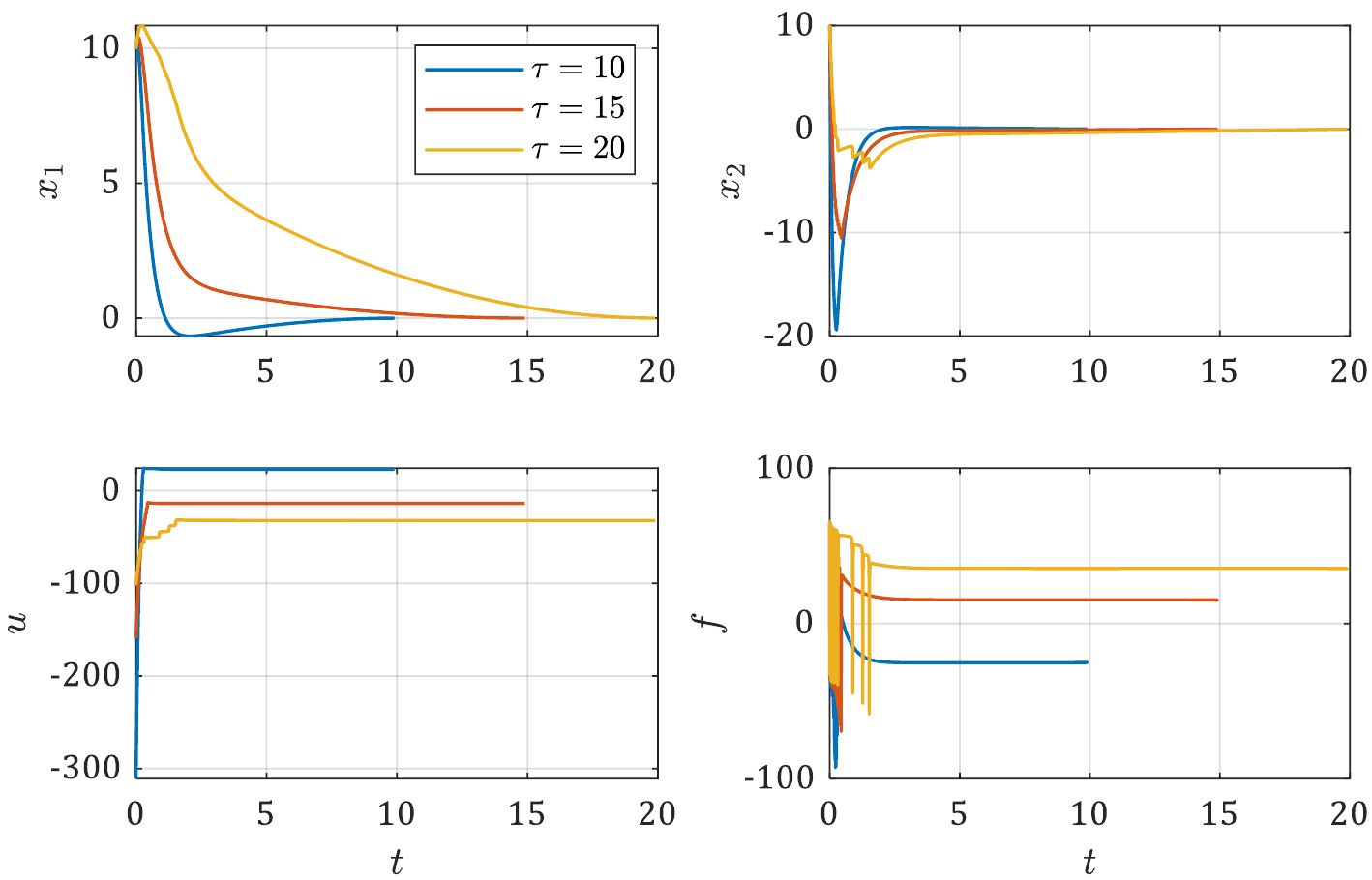}
\caption{Simulation results for the second-order system of Example \ref{ex:2} under the proposed PTC.}
\label{fig:3}
\end{figure}

\begin{table}[h!]
\begin{center}
\caption{PTC for a 4th-order system: $\pi(x,t,\tau)=\sum_{i+1}^4p_i(t,\tau)x_i$.}
\label{table:1}
\begin{tabular}{l|l}
\hline
$i$ & $p_i(t,\tau)$\\
\hline
 $1$ & $\frac{c_1}{\alpha^{4}(\tau-t)^{4}}$\\
 $2$ & $\frac{c_2}{\alpha^{3}(\tau-t)^{3}}-\frac{c_3}{\alpha^{2}(\tau-t)^{3}}+\frac{c_4}{\alpha(\tau-t)^{3}}+\frac{1}{(\tau-t)^{3}}$\\
 $3$ & $\frac{c_3}{\alpha^{2}(\tau-t)^{2}}-\frac{3c_4}{\alpha(\tau-t)^{2}}-\frac{7}{(\tau-t)^{2}}$\\
 $4$ & $\frac{c_4}{\alpha(\tau-t)}+\frac{6}{(\tau-t)}$\\
 \hline
\end{tabular}
\end{center}
\end{table}

\begin{example}
\label{ex:3}
Consider a fourth-order system as \eqref{eq:4} with $n=4$ and assume $g(t)=1$. For this system, the PTC expressed by \eqref{eq:5} can be written as $u=\pi(x,t,\tau)=\sum_{i+1}^4p_i(t,\tau)x_i$ with time-dependent gains stated in Table \ref{table:1}. Suppose that the disturbance is $f(x)=\sum_{i=1}^4w_ix_i$ where $w_i\sim \mathcal{U}(-10^{-3},10^{-3})$ for which Assumption \ref{ass:1} holds with $\phi=10^{-3}$. The input gain is unity, hence $\gamma_{\mathrm{min}}=1$. Since $\phi_0=0$, the triangular stability is achievable. For a convergence time of $\tau=10$, selecting $c_1=-1$, $c_2=c_4=-4$, and $c_3=-6$, conditions \eqref{eq:a2} and \eqref{eq:a1} are satisfied with $\alpha=1.6810\times10^{-5}$. The simulation results are plotted in Fig. \ref{fig:4} for $x(0)=[10,10,10,10]^T$.
\end{example}

\begin{figure}[!h]
\centering\includegraphics[width=1\linewidth]{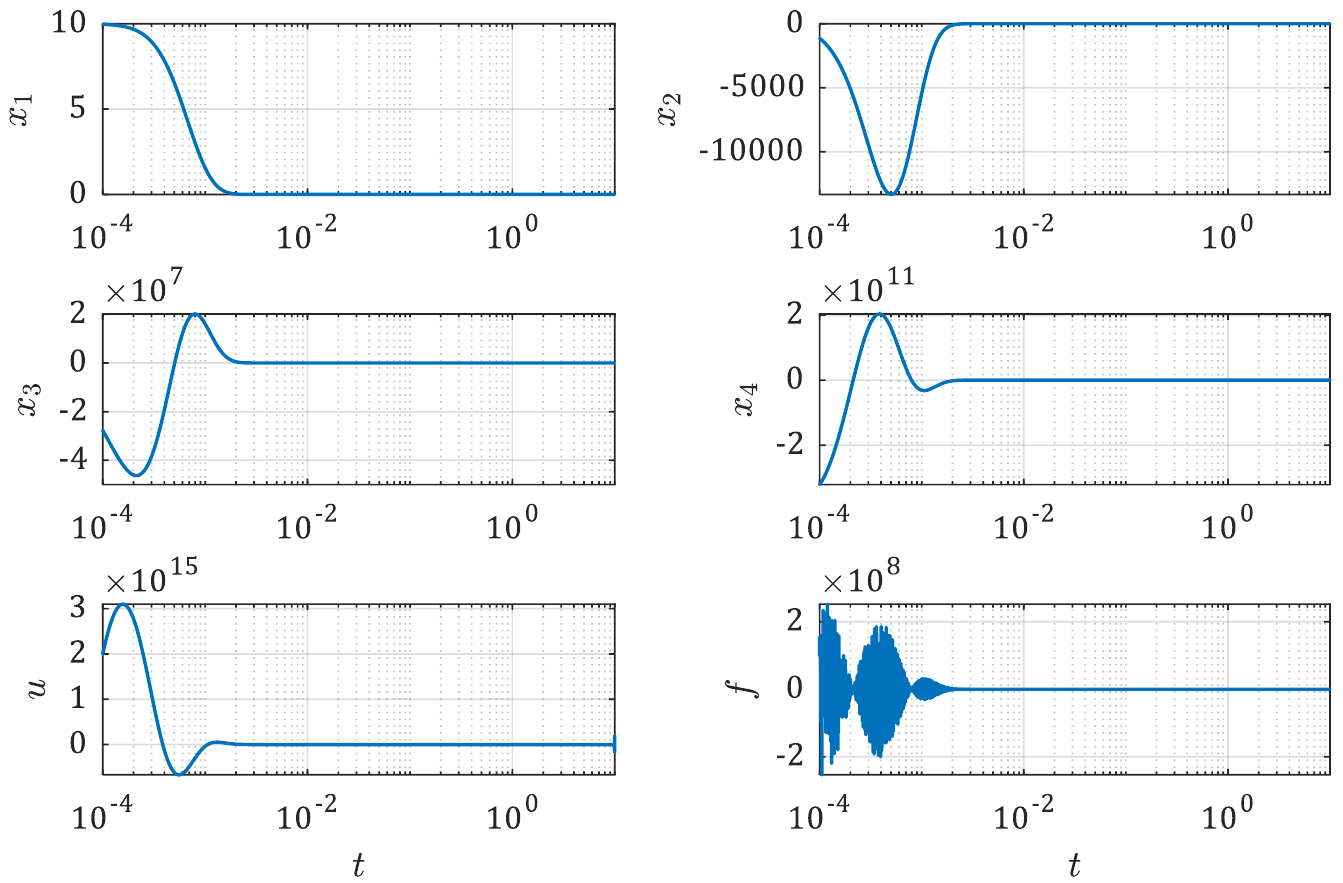}
\caption{Simulation results for the fourth-order system of Example \ref{ex:3} under the proposed PTC.}
\label{fig:4}
\end{figure}

\section{Proof of Theorem \ref{th:1}}
\label{sec:proof}

In this section, we explain how the PTC of Theorem \ref{th:1} is obtained. First, we discuss the unperturbed system with (known) unity gain. Next, the unperturbed system with uncertain input gain is studied. Finally, the boundedness of the controller is proved. Consider the following definition:

\begin{definition}
\label{def:kappa}
Define the following classes of functions:
\begin{enumerate}
\item A continuous function $\kappa(\cdot):[0,\tau)\rightarrow[0,\infty)$ is said to be class $\mathcal{K}$ (or $\kappa\in\mathcal{K}(\tau)$) if it is strictly increasing subject to $\lim_{t\rightarrow0^+}\kappa(t)=0$ and $\lim_{t\rightarrow\tau^-}\kappa(t)=\infty$ \cite{khalil2002nonlinear}.
\item A continuous function $\mu(\cdot):[0,\infty)\rightarrow[0,\tau)$ is said to be class $\mathcal{M}$ (or $\mu\in\mathcal{M}(\tau)$) if its inverse function is class $\mathcal{K}$ (or $\mu^{-1}\in\mathcal{K}(\tau)$). Therefore, $\mu$ is a continuous increasing function subject to $\lim_{t\rightarrow0^+}\mu(t)=0$ and $\lim_{t\rightarrow\infty}\mu(t)=\tau$. 
\end{enumerate}
\end{definition}

\subsection{State boundedness of the unperturbed system}

Consider a chain of integrators in terms of state vector $\xi\in\mathbb{R}^n$, as an auxiliary system adopted for the design of PTC, with $u=\sum_{i=i}^nc_i\xi_i$ as an  input:
\begin{equation}
\label{eq:IV_1}
\left\{\begin{array}{lcl}
\dot{\xi}_i & = & \xi_{i+1}; \hspace{2mm} i=1:n-1  \\
\dot{\xi}_n & = & \sum_{i=1}^nc_i\xi_i
\end{array}\right.
\end{equation}
which is exponentially stable if the matrix $E$ stated in \eqref{eq:6} is Hurwitz. To find a trajectory bound, consider a Lyapunov function $V=\xi^TP\xi$ where $P$ is the solution of the Lyapunov equation $E^TP+PE+2\mathbb{I}_n=0$ which results in $\dot{V}=-2\|\xi\|^2$. We have $\lambda_{\mathrm{min}}(P)\|\xi\|^2\leq V\leq\lambda_{\mathrm{max}}(P)\|\xi\|^2$, hence, $\dot{V}\leq-2V/\lambda_{\mathrm{max}}(P)$. Applying the differential form of Gronwall's inequality (see Theorem 1.9.1 in \cite{lakshmikantham1969differential}) leads to $V(t)\leq V(0)\exp\left(-2t/\lambda_{\mathrm{max}}(P)\right)$. Therefore, we have:
\begin{equation}
\label{eq:IV_1ppp}
\begin{split}
\|\xi(t)\|\leq \sqrt{\frac{\lambda_{\mathrm{max}}(P)}{\lambda_{\mathrm{min}}(P)}}\|\xi(0)\|\exp\left(-t/\lambda_{\mathrm{max}}(P)\right)
\end{split}
\end{equation}

Let $\acute{\kappa}(\mu)\coloneqq d\kappa(\mu)/d\mu$ and $\dot{\mu}(t)\coloneqq d\mu(t)/dt$. First, we develope a PTC for system \eqref{eq:IV_1}. Next, we show that the obtained PTC also works for system \eqref{eq:4} where $f\neq0$. Accordingly, we map system \eqref{eq:IV_1} from $[0,\infty)$ to a prescribed-time interval $[0,\tau)$ via $\mu(t)\in\mathcal{M}(\tau)$ to obtain a new system with state $y=[y_1,\cdots,y_n]^T\in\mathbb{R}^n$ using the following equivalent rules:
\begin{equation}
\label{eq:IV_2}
y_1(\mu(t))=\xi_1(t)
\end{equation}
\begin{equation}
\label{eq:IV_2p}
\xi_1(\kappa(\mu))=y_1(\mu)
\end{equation}
that \eqref{eq:IV_2} can be solved for the rest of state variables as:
\begin{equation}
\label{eq:IV_3}
\begin{array}{rcl}
\dot{\mu}(t)y_2(\mu(t))&=&\xi_2(t) \\
\mu^{(2)}(t)y_2(\mu(t))+\dot{\mu}^2(t)y_3(\mu(t))&=&\xi_3(t) \\
&\smash{\vdots}& 
\end{array}
\end{equation}
where $y_{i+1}=dy_i/d\mu$, and \eqref{eq:IV_2p} can similarly yield the followings:
\begin{equation}
\label{eq:IV_3p}
\begin{array}{rcl}
\acute{\kappa}(\mu)\xi_2(\kappa(\mu))&=&y_2(\mu) \\
{\kappa}^{(2)}(\mu)\xi_2(\kappa(\mu))+\acute{\kappa}^2(\mu)\xi_3(\kappa(\mu))&=&y_3(\mu) \\
&\smash{\vdots}&
\end{array}
\end{equation}
We define $\mu(t)=\tau(1-\exp(-\alpha t))$ and observe that $\dot{\mu}(t)=\alpha\tau\exp(-\alpha t)$, therefore, $\mu^{(i)}(t)=(-\alpha)^{i-1}\dot{\mu}(t)$ for all $i\in\mathbb{N}$. According to this assumption, we have
$\kappa(\mu)=-(1/\alpha)\ln(1-\mu/\tau)$, therefore $\acute{\kappa}(\mu)=(1/\alpha)/(\tau-\mu)$ and  $\kappa^{(i)}(\mu)=(-\alpha)^{i-1}(i-1)!\acute{\kappa}^{i}(\mu)$. Hence, when this class $\mathcal{M}$ function is used, procedures \eqref{eq:IV_3} and \eqref{eq:IV_3p} can be expressed, respectively, as:
\begin{equation}
\label{eq:IV_4}
\xi_j(t)=\sum_{i=2}^{j}\stirlingii{j-1}{i-1}(-\alpha)^{j-i}\dot{\mu}^{i-1}(t)y_i(\mu),\hspace{2mm}j\geq2
\end{equation}
\begin{equation}
\label{eq:IV_4p}
y_j(\mu)=\acute{\kappa}^{j-1}(\mu)\sum_{i=2}^{j}\stirlingi{j-1}{i-1}(-\alpha)^{j-i}\xi_i(t),\hspace{2mm}j\geq2
\end{equation}
Also, the following statement is obtainable for $\dot{\xi}_{n}\equiv \xi_{n+1}$:
\begin{equation}
\label{eq:IV_5}
\dot{\xi}_{n}(t)=\frac{\acute{y}_n(\mu)}{\acute{\kappa}^{n}(\mu)}+\sum_{i=2}^{n}\stirlingii{n}{i-1}(-\alpha)^{n-i+1}\frac{y_i(\mu)}{\acute{\kappa}^{i-1}(\mu)}
\end{equation}
Substitute \eqref{eq:IV_5} into \eqref{eq:IV_1} and yields the following system:
\begin{equation}
\label{eq:IV_6}
\left\{\begin{array}{lcl}
\acute{y}_i & = & y_{i+1}; \hspace{2mm} i=1:n-1  \\
\acute{y}_n & = & -\sum_{i=2}^{n}\stirlingii{n}{i-1}(-\alpha)^{n-i+1}\acute{\kappa}^{n-i+1}y_i+\acute{\kappa}^{n}c_1y_1 \\
&&+ \sum_{j=2}^nc_j\sum_{i=2}^{j}\stirlingii{j-1}{i-1}(-\alpha)^{j-i}\acute{\kappa}^{n-i+1}y_i
\end{array}\right.
\end{equation}
System \eqref{eq:IV_6} mimics the behavior of system \eqref{eq:IV_1} in a finite interval $[0,\tau)$. Substitute $\acute{\kappa}$ in \eqref{eq:IV_6} to yield:
\begin{equation}
\label{eq:IV_7}
\left\{\begin{array}{lcl}
\acute{y}_i & = & y_{i+1}; \hspace{2mm} i=1:n-1  \\
\acute{y}_n & = & -\sum_{i=2}^{n}\stirlingii{n}{i-1}\frac{(-1)^{n-i+1}}{(\tau-\mu)^{n-i+1}}y_i+\frac{c_1}{\alpha^{n}(\tau-\mu)^{n}}y_1 \\
&&+ \sum_{j=2}^n\frac{c_j}{\alpha^{n-j+1}}\sum_{i=2}^{j}\stirlingii{j-1}{i-1}\frac{(-1)^{j-i}}{(\tau-\mu)^{n-i+1}}y_i
\end{array}\right.
\end{equation}
It can be verified that a controller $u=\pi(y,\mu,\tau)$, as proposed by \eqref{eq:5}, forces the system to behave as \eqref{eq:IV_7}. It is provable that identities \eqref{eq:IV_4} and \eqref{eq:IV_4p} can be represented in the following matrix forms:
\begin{equation}
\label{eq:IV_8}
\xi(t)=(A_n\circ(S_nK_n^{-1}))y(\mu)
\end{equation}
\begin{equation}
\label{eq:IV_9}
y(\mu)=(A_n\circ(M_n^{-1}s_n))\xi(t)
\end{equation}
where $A_n\in\mathbb{R}^{n\times n}$ is a $\tau$-dependent lower triangular Toeplitz matrix as:
\begin{equation}
\label{eq:7}
A_n(i,j)=\left\{\begin{array}{lcl}
(-\alpha)^{i-j} & \mathrm{if} & i\geq j \\
0 & \mathrm{if} & \mathrm{otherwise}
\end{array}\right.
\end{equation}
and $K_n$ and $M_n$ are diagonal matrices as:
\begin{equation}
\label{eq:IV_10}
K_n=\mathrm{diag}\left(1,\acute{\kappa},\acute{\kappa}^2,\cdots,\acute{\kappa}^{n-1}\right)
\end{equation}
\begin{equation}
\label{eq:IV_11}
M_n=\mathrm{diag}\left(1,\dot{\mu},\dot{\mu}^2,\cdots,\dot{\mu}^{n-1}\right)
\end{equation}

We need the following lemma to construct an inequality for the mapped system:
\begin{lemma}
\label{lem:2}
If $C=A\circ B$, then we have $\|C\|\leq\|A\|\|B\|$.
\end{lemma}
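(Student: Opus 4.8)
The plan is to realize the Hadamard product $A\circ B$ as a compression (principal submatrix) of the Kronecker product $A\otimes B$, and then combine two standard facts: the spectral norm does not increase under compression by a matrix with orthonormal columns, and $\|A\otimes B\|=\|A\|\|B\|$.

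First I would set up the embedding. For $A,B\in\mathbb{R}^{n\times n}$, let $J\in\mathbb{R}^{n^2\times n}$ be the selection matrix whose $k$th column is the standard basis vector $e_{(k-1)n+k}$ of $\mathbb{R}^{n^2}$. Since $A\otimes B$ carries the entry $A(i,j)B(k,l)$ in the position indexed by block-row $(i,k)$ and block-column $(j,l)$, extracting the rows and columns selected by $J$ (those with $k=i$ and $l=j$) recovers exactly $A(i,j)B(i,j)=(A\circ B)(i,j)$. Hence $A\circ B=J^T(A\otimes B)J$.

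Next, because the columns of $J$ are distinct standard basis vectors, they are orthonormal, so $J^TJ=\mathbb{I}_n$ and $\|J\|=1$. Submultiplicativity of the $2$-norm then gives
\begin{equation}
\label{eq:lem2pf1}
\|A\circ B\|=\|J^T(A\otimes B)J\|\leq\|J^T\|\,\|A\otimes B\|\,\|J\|=\|A\otimes B\|.
\end{equation}
It remains to evaluate $\|A\otimes B\|$. Using $(A\otimes B)^T(A\otimes B)=(A^TA)\otimes(B^TB)$ together with the fact that the eigenvalues of a Kronecker product are the pairwise products of the eigenvalues of its factors, the largest eigenvalue of $(A^TA)\otimes(B^TB)$ equals $\lambda_{\mathrm{max}}(A^TA)\,\lambda_{\mathrm{max}}(B^TB)$. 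Taking square roots yields $\|A\otimes B\|=\|A\|\|B\|$, which combined with \eqref{eq:lem2pf1} completes the argument.

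The only delicate points are bookkeeping rather than conceptual: I must verify the index identity $A\circ B=J^T(A\otimes B)J$ with the correct block ordering, and justify the eigenvalue-product structure of the Kronecker product. A fully self-contained alternative avoids $J$ altogether by bounding the bilinear form directly: for unit vectors $x,y$, padding them into $\tilde{x},\tilde{y}\in\mathbb{R}^{n^2}$ supported on the indices $(k-1)n+k$ gives $y^T(A\circ B)x=\tilde{y}^T(A\otimes B)\tilde{x}$ with $\|\tilde{x}\|=\|x\|$ and $\|\tilde{y}\|=\|y\|$, so $|y^T(A\circ B)x|\leq\|A\otimes B\|$, and maximizing over $x$ and $y$ reproduces the bound.
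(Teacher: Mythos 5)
Your proof is correct and follows essentially the same route as the paper's, which likewise observes that $\|A\otimes B\|=\|A\|\|B\|$ and that the Hadamard product is a principal submatrix of the Kronecker product; you have simply supplied the details (the selection matrix $J$ realizing the compression $A\circ B=J^T(A\otimes B)J$, and the eigenvalue argument for the Kronecker norm) that the paper leaves implicit. No gaps.
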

\begin{proof}
Observe that $\|A\|\|B\|=\|A\otimes B\|$ and consider the fact that the Hadamard product is a principal submatrix of the Kronecker product.
\end{proof}

Given \eqref{eq:IV_8}, for the initial condition one can obtain $\|\xi(0)\|\leq\|A_n\|\|S_n\|\|y(0)\|$. According to \eqref{eq:IV_9} and Lemma \ref{lem:2}, assuming $\alpha\leq1/\tau$, we can write the following inequality:
\begin{equation}
\label{eq:IV_13}
\resizebox{\columnwidth}{!}{
$\begin{split}
\|y(\mu)\|&\leq\|A_n\circ(M_n^{-1}s_n)\|\|\xi(t)\|\leq\|A_n\|\|M_n^{-1}\|\|s_n\|\|\xi(t)\|\\
&\leq\|A_n\|\|s_n\|\frac{\|\xi(t)\|}{\dot{\mu}^{n-1}}=\frac{\|A_n\|\|s_n\|}{(\alpha\tau)^{n-1}}\frac{\|\xi(t)\|}{\exp(\alpha(1-n)t)}\\
&\leq\frac{\|A_n\|^2\|s_n\|\|S_n\|\|y(0)\|}{{(\alpha\tau)^{n-1}}}\sqrt{\frac{\lambda_{\mathrm{max}}(P)}{\lambda_{\mathrm{min}}(P)}}\\
&\cdot\exp\left[\left(\alpha(n-1)-1/\lambda_{\mathrm{max}}(P)\right)t\right]
\end{split}$}
\end{equation}
Substitute $t=\kappa(\mu)=-(1/\alpha)\ln(1-\mu/\tau)$ to obtain:
\begin{equation}
\label{eq:IV_14}
\resizebox{\columnwidth}{!}{
$\begin{split}
\|y(\mu)\|&\leq\frac{\|A_n\|^2\|s_n\|\|S_n\|\|y(0)\|}{(\alpha\tau)^{n-1}}\sqrt{\frac{\lambda_{\mathrm{max}}(P)}{\lambda_{\mathrm{min}}(P)}}\\
&\cdot\exp\left[\left(1-n+\frac{1}{\alpha\lambda_{\mathrm{max}}(P)}\right)\ln\left(1-\frac{\mu}{\tau}\right)\right]\\
&=\frac{\|A_n\|^2\|s_n\|\|S_n\|\|y(0)\|}{(\alpha\tau)^{n-1}}\sqrt{\frac{\lambda_{\mathrm{max}}(P)}{\lambda_{\mathrm{min}}(P)}}\left(1-\frac{\mu}{\tau}\right)^{1-n+\frac{1}{\alpha\lambda_{\mathrm{max}}(P)}}
\end{split}$}
\end{equation}
that satisfies the triangular stability condition if the power of $\Lambda(\mu/\tau)=1-\mu/\tau$ is greater than or equal to unity, which is condition \eqref{eq:a1} for $\gamma=1$. According to the obtained inequalities, the mapped nominal system is triangularly stable as long as the original system is exponentially stable. Note that the closed-loop system of an unperturbed chain of integrators under the proposed PTC, when $x_i$ is substituted by $y_i$, and $t$ is substituted by $\mu$, has the same behavior as the mapped system. 

\begin{remark}
\label{rem:6}
The matrix norms used in this section can be further simplified and replaced by their upper bounds. For instance, the following inequalities hold:
\begin{enumerate}
\item $\|s_n\|\leq \sqrt{n}(n-1)!$
\item $\|S_n\|\leq \sqrt{n} B_{n-1}$, where $B_i$ is the $i$th Bell number
\item $\|A_n\|\leq \sqrt{n}\sum_{i=0}^{n-1}\alpha^i\Rightarrow\|A_n\|\leq \sqrt{n}/(1-\alpha)$ if $\alpha<1$
\end{enumerate}
\end{remark}

\subsection{State boundedness of the perturbed system}

At this step, we have analyzed the behavior of the perturbed system with uncertain input gain under the proposed controller. Without loss of generality, assume $g(t)=1$, and inverse-map system \eqref{eq:1} under the proposed controller of Theorem \ref{th:1} to yield the following system:
\begin{equation}
\label{eq:IV_16}
\left\{\begin{array}{lcl}
\dot{\xi}_i & = & \xi_{i+1}; \hspace{2mm} i=1:n-1  \\
\dot{\xi}_n & = & (1-\frac{1}{\rho})\sum_{i=2}^{n+1}\stirlingi{n}{i-1}(-\alpha)^{n+1-i}\xi_i(t) \\
&&+\sum_{i=i}^nc_i\xi_i+\frac{\dot{\mu}^{n}}{\rho}f[(A_n\circ(M_n^{-1}s_n))\xi(t),t]
\end{array}\right.
\end{equation}
where $\rho=\gamma/\gamma_{\mathrm{min}}\geq1$. System \eqref{eq:IV_16} can be written as:
\begin{equation}
\label{eq:IV_17}
\dot{\xi}=E\xi+\left(1-1/\rho\right)s(\xi,t)+(1/\rho)\bar{f}(\xi,t)
\end{equation}
where $\bar{f}(\xi,t)=[0,\cdots,0,\dot{\mu}^{n}f[(A_n\circ(M_n^{-1}s_n))\xi(t),t]]^T$ and $s(\xi,t)=[0,\cdots,0,\sum_{i=2}^{n+1}\stirlingi{n}{i-1}(-\alpha)^{n+1-i}\xi_i(t)]^T$. According to Assumption \ref{ass:1}, we have $\|\bar{f}(\xi,t)\|=\dot{\mu}^{n}\|f[(A_n\circ(M_n^{-1}s_n))\xi(t),t]\|\leq\dot{\mu}^{n}\phi\|A_n\circ(M_n^{-1}s_n)\|\|\xi(t)\|+\dot{\mu}^{n}\phi_0\leq\dot{\mu}\phi\|A_n\|\|s_n\|\|\xi(t)\|+\dot{\mu}^{n}\phi_0$. Also, one can verify that $\|s(\xi,t)\|\leq n!\alpha\|\xi(t)\|$. To obtain the trajectory bounds, consider a Lyapunov function as $V=\xi^TP\xi$ where $P$ is the solution of the Lyapunov equation $E^TP+PE+2\mathbb{I}_n=0$. Thus, we have $\dot{V}=-2\|\xi\|^2+2(1-1/\rho)\xi^TPs(\xi,t)+2(1/\rho)\xi^TP\bar{f}(\xi,t)\leq-2\|\xi\|^2+2(1-1/\rho)n!\alpha\lambda_{\mathrm{max}}(P)\|\xi\|^2+2(1/\rho)\dot{\mu}\lambda_{\mathrm{max}}(P)\phi\|A_n\|\|s_n\|\|\xi\|^2+2(1/\rho)\dot{\mu}^{n}\lambda_{\mathrm{max}}(P)\phi_0\|\xi\|$ that reduces to $\dot{V}\leq2\alpha\theta\exp(-\alpha t)V+2\vartheta V+2\alpha^n\tau^n\phi_0[\lambda_{\mathrm{max}}(P)/\sqrt{\lambda_{\mathrm{min}}(P)}]\exp(-n\alpha t)\sqrt{V}$ where $\theta=(1/\rho)\tau\phi\|A_n\|\|s_n\|\lambda_{\mathrm{max}}(P)/\lambda_{\mathrm{min}}(P)$ and $\vartheta=(1-1/\rho)n!\alpha\lambda_{\mathrm{max}}(P)/\lambda_{\mathrm{min}}(P)-1/\lambda_{\mathrm{max}}(P)$. Changing the variable as $V=W^2$, the obtained inequality can be stated as
\begin{equation}
\label{eq:IV_21p1}
\resizebox{\columnwidth}{!}{
$\begin{split}
\dot{W}&\leq\left[\alpha\theta\exp(-\alpha t)+\vartheta\right]W+\alpha^n\tau^n\phi_0\frac{\lambda_{\mathrm{max}}(P)}{\sqrt{\lambda_{\mathrm{min}}(P)}}\exp(-n\alpha t)
\end{split}$}
\end{equation}
According to the solution of the nonhomogeneous linear differential equation (the well known variation of constants formula) for the right-hand side of \eqref{eq:IV_21p1} and the Petrovitsch's theorem of differential inequalities (or Theorem 1.2.3 in \cite{lakshmikantham1969differential}), there exists $t_0>0$ such that for all $t\in[t_0,\infty)$ the following inequality holds:
\begin{equation}
\label{eq:IV_21p2}
\resizebox{0.85\columnwidth}{!}{$\begin{split}
W(t)&\leq W(0)\exp\left[-\theta\exp(-\alpha t)+\vartheta t+\theta\right]\\
&+\frac{\alpha^n\tau^n\phi_0\lambda_{\mathrm{max}}(P)}{\sqrt{\lambda_{\mathrm{min}}(P)}}\exp\left[-\theta\exp(-\alpha t)+\vartheta t\right]\\
&\cdot\int_0^t\exp\left[\theta\exp(-\alpha s)-(\vartheta+n\alpha)s\right]ds
\end{split}$}
\end{equation}
After calculating the integral (by the Taylor series expansion of the exponential function) and using $\exp[-\theta\exp(-\alpha t)]\leq1$, equation \eqref{eq:IV_21p2} reduces to the following inequality:
\begin{equation}
\label{eq:IV_21p3}
\resizebox{0.85\columnwidth}{!}{
$\begin{split}
W(t)&\leq W(0)\exp(\vartheta t+\theta)-\frac{\alpha^n\tau^n\phi_0\lambda_{\mathrm{max}}(P)}{\sqrt{\lambda_{\mathrm{min}}(P)}}\exp(\vartheta t)\\
&\cdot\sum_{k=0}^\infty\frac{(\theta^k/k!)(\exp[(-(k+n)\alpha-\vartheta)t]-1)}{\alpha(k+n)+\vartheta}
\end{split}$}
\end{equation}
Since we have $\sqrt{\lambda_{\mathrm{min}}(P)}\|\xi\|\leq W\leq\sqrt{\lambda_{\mathrm{max}}(P)}\|\xi\|$, then:
\begin{equation}
\label{eq:IV_22}
\resizebox{0.85\columnwidth}{!}{
$\begin{split}
\|\xi(t)\|&\leq \sqrt{\frac{\lambda_{\mathrm{max}}(P)}{\lambda_{\mathrm{min}}(P)}}\|\xi(0)\|\exp(\vartheta t+\theta)\\
&-\frac{\alpha^n\tau^n\phi_0\lambda_{\mathrm{max}}(P)}{\lambda_{\mathrm{min}}(P)}\sum_{k=0}^\infty\frac{(\theta^k/k!)\exp[-(k+n)\alpha t]}{\alpha(k+n)+\vartheta}\\
&+\frac{\alpha^n\tau^n\phi_0\lambda_{\mathrm{max}}(P)}{\lambda_{\mathrm{min}}(P)}\sum_{k=0}^\infty\frac{(\theta^k/k!)\exp(\vartheta t)}{\alpha(k+n)+\vartheta}
\end{split}$}
\end{equation}
Note that an upper bound exist for each sum $\sum_{k=0}^\infty(\theta^k/k!)\exp[-\alpha kt]/[\alpha(k+n)+\vartheta]\leq\exp(\theta)/(\alpha n+\vartheta)$ and $\sum_{k=0}^\infty(\theta^k/k!)/[\alpha(k+n)+\vartheta]\leq\exp(\theta)/(\alpha n+\vartheta)$. Hence, similar to \eqref{eq:IV_13}, one can obtain the following (the obtained upper bounds for the sums are also substituted):
\begin{equation}
\label{eq:IV_23}
\resizebox{0.85\columnwidth}{!}{
$\begin{split}
\|y(\mu)\|&\leq\frac{\|A_n\|^2\|s_n\|\|S_n\|\|y(0)\|\exp(\theta)}{\alpha^{n-1}\tau^{n-1}}\sqrt{\frac{\lambda_{\mathrm{max}}(P)}{\lambda_{\mathrm{min}}(P)}}\\
&\cdot\exp\left[(\vartheta+\alpha(n-1))t\right]\\
&+\frac{\|A_n\|\|s_n\|\alpha\tau\phi_0\exp(\theta)}{\alpha n+\vartheta}\frac{\lambda_{\mathrm{max}}(P)}{\lambda_{\mathrm{min}}(P)}\\
&\cdot(\exp\left[(\vartheta+\alpha(n-1))t\right]-\exp(-\alpha t))
\end{split}$}
\end{equation}
Thus, substituting $t=\kappa(\mu)=-(1/\alpha)\ln(1-\mu/\tau)$, it can be verified that the following inequality holds for $\mu\in[\mu(t_0),\infty)$:
\begin{equation}
\label{eq:IV_24}
\resizebox{0.85\columnwidth}{!}{
$\begin{split}
\|y(\mu)\|&\leq\frac{\|A_n\|^2\|s_n\|\|S_n\|\|y(0)\|\exp(\theta)}{\alpha^{n-1}\tau^{n-1}}\sqrt{\frac{\lambda_{\mathrm{max}}(P)}{\lambda_{\mathrm{min}}(P)}}\\
&\cdot\left(1-\frac{\mu}{\tau}\right)^{1-\vartheta/\alpha-n}\\
&+\frac{\|A_n\|\|s_n\|\alpha\tau\phi_0\exp(\theta)}{\alpha n+\vartheta}\frac{\lambda_{\mathrm{max}}(P)}{\lambda_{\mathrm{min}}(P)}\\
&\cdot\left[\left(1-\frac{\mu}{\tau}\right)^{1-\vartheta/\alpha-n}-\left(1-\frac{\mu}{\tau}\right)\right]
\end{split}$}
\end{equation}
Therefore, state $y$ is triangularly attractive in terms of $\mu$ if all of the powers of $\Lambda(\mu/\tau)=1-\mu/\tau$ are greater than or equal to unity, which needs condition \eqref{eq:a1}.

In the case of $\phi_0=0$, we have $\dot{V}\leq2(\alpha\theta+\vartheta)V$. Applying the differential form of the Gronwall's inequality (or Theorem 1.9.1 in \cite{lakshmikantham1969differential}), results in $V(t)\leq V(0)\exp[2(\alpha\theta+\vartheta)t]$. Similar to the proof of the unperturbed case, one can find $\alpha\theta+\vartheta\leq0$ as well as \eqref{eq:a1} are sufficient conditions for the closed-loop system to be triangularly stable, which can be reduced to condition \eqref{eq:a2} after applying inequalities discussed in Remark \ref{rem:6}. As the final step, consider the fact that the closed-loop perturbed system under the proposed PTC, when $\xi_i$ is substituted by $y_i$ and $t$ is substituted by $\mu$, acts exactly as the mapped system.

\subsection{Controller boundedness of the perturbed system}

In this subsection, We are going to prove that the PTC proposed by Theorem \ref{th:1} produces finite values of control input. In terms of the mapped system variables (where $y_i$ and $\mu$ are used instead of $x_i$ and $t$), we know that $\pi(y,\mu,\tau)=\acute{y}_n(\mu)-f(y,\mu)$. Using the triangle inequality for vector norms, one can obtain $\|\pi(y,\mu,\tau)\|\leq\|\acute{y}_n(\mu)\|+|f(y,\mu)|\leq\|\acute{y}_n(\mu)\|+\phi\|y(\mu)\|+\phi_0$. When $\mu\rightarrow\tau$, from the results of the previous section we have $\|y_n(\mu)\|\leq\|y(\mu)\|\leq\sigma\Lambda^k(\mu/\tau)$ for some $\sigma>0$ and $k>1$, thus $1\geq\lim_{\mu\rightarrow\tau}\|y_n(\mu)\|/[\sigma\Lambda^k(\mu/\tau)]$. According to the L'H\^{o}pital's rule, $1\geq\lim_{\mu\rightarrow\tau}\|y_n(\mu)\|/[\sigma\Lambda^k(\mu/\tau)]=\lim_{\mu\rightarrow\tau}[d\|y_n(\mu)\|/d\mu]/[-k\sigma\Lambda^{k-1}(\mu/\tau)/\tau]$. Therefore, as $\mu\rightarrow\tau$, we have $\|\acute{y}_n(\mu)\|\leq d\|y_n(\mu)\|/d\mu\leq\sigma^\prime\Lambda^{k-1}(\mu/\tau)$ for some $\sigma^\prime>0$ while the power $k-1$ can be less than unity. By  a substitution, one can obtain that the proposed PTC is eventually bounded by  $\|\pi(y,\mu,\tau)\|\leq\sigma^{\prime\prime}\Pi(\mu/\tau)+\phi_0$ for some finite $\sigma^{\prime\prime}>0$ where $\Pi(\mu/\tau)$ is a rectangular function.

\section{Conclusions}

The notion of triangular stability has been defined in this letter, and it has been shown that a prescribed-time controller can provide global triangular stability (or attractivity) for a perturbed normal form system with uncertain input gain.
Future studies could investigate triangularly stable (or attractive) prescribed-time controllers for other types of systems (e.g., delayed systems, constrained systems, stochastic systems, multi-agent systems, etc.), under more severe disturbances. 

\bibliographystyle{IEEEtran}
\bibliography{root}

\end{document}